\documentclass[journal]{IEEEtran}
%


%

%
\ifCLASSOPTIONcompsoc
\else
\fi
%

%
\ifCLASSINFOpdf
\else
\fi
\usepackage{graphicx}
\usepackage{amssymb}
\usepackage{amsmath}
\usepackage{bm}
\usepackage[Algorithm, boxed]{algorithm}
\usepackage{algorithmic}
\usepackage[bookmarks=false]{hyperref}
\usepackage{breakurl}
\usepackage{multirow}
\hyphenation{op-tical net-works semi-conduc-tor}

\newtheorem{theorem}{Theorem}

\newtheorem{remark}{Remark}

\begin{document}
%
\title{A New Approach of Deriving Bounds between Entropy and Error 
from Joint Distribution: Case Study for Binary Classifications}
%
%
%
%

\author{Bao-Gang Hu,~\IEEEmembership{Senior Member,~IEEE},~Hong-Jie Xing
\IEEEcompsocitemizethanks{\IEEEcompsocthanksitem Bao-Gang Hu is with NLPR/LIAMA,
Institute of Automation, Chinese Academy of Sciences, Beijing 100190, China.\protect\\
E-mail: hubg@nlpr.ia.ac.cn \protect\\
   Hong-Jie Xing is with College of Mathematics and Computer Science, HeBei University,
Baoding, 071002, China. \protect\\
E-mail: hjxing@hbu.edu.cn 
}
\thanks{}}

\IEEEcompsoctitleabstractindextext{%
\begin{abstract}
The existing upper and lower bounds between entropy and error  
are mostly derived through an 
inequality means without linking to joint distributions.
In fact, from either theoretical or application viewpoint, 
there exists a need to achieve a complete set of interpretations to the
bounds in relation to joint distributions.
For this reason, in this work 
we propose a new approach of deriving the bounds between entropy and error 
from a joint distribution. The specific case study is given on 
binary classifications, which can justify the need of 
the proposed approach. 
Two basic types of classification errors are investigated, namely, 
the Bayesian and non-Bayesian errors. For both errors, 
we derive the closed-form
expressions of upper bound and lower bound 
in relation to joint distributions.  
The solutions
show that Fano's lower bound is an exact bound for any type 
of errors in a relation diagram of ``Error Probability vs. 
Conditional Entropy''.
A new upper bound for the Bayesian error is derived with respect to 
the minimum prior probability, which is generally tighter 
than Kovalevskij's upper bound.

\end{abstract}

\begin{keywords}
Entropy, error probability, Bayesian errors, analytical, upper bound, lower bound
\end{keywords}}

\maketitle

\IEEEdisplaynotcompsoctitleabstractindextext

%
\IEEEpeerreviewmaketitle

\section{Introduction}
\label{sec:introduction}
In information theory, the relations between entropy and error 
probability are one of the important fundamentals.  
Among the related studies, one milestone is Fano's inequality (also known as Fano's 
lower bound on the error probability of decoders),
which was originally proposed in 1952 by Fano,
but formally published in 1961 \cite{Fano1961}.
It is well known that Fano's inequality plays a critical role in 
deriving other theorems and criteria in information theory 
\cite{cover2006}\cite{verdu1998}\cite{yeung2002}. 
However, within the research community,  it has not been widely accepted exactly 
who was
first to develop the upper bound on the error probability \cite{Golic1999}. 
According to \cite{Vajda2007} \cite{Morales2010},
Kovalevskij \cite{Kovalevskij1965} was recognized as the first 
to derive the upper bound of the error probability in relation to entropy
in 1965. Later, several
researchers, such as
Chu and Chueh in 1966 \cite{Chu1966}, Tebbe and Dwyer III 
in 1968 \cite{Tebbe1968},
Hellman and Raviv in 1970 \cite{Hellman1970},
independently developed upper bounds.    

The upper and lower bounds of error probability 
have been a long-standing topic in studies on
information theory \cite{Chen1971} \cite{Bassat1978} \cite{Golic1987} \cite{Feder1994}
\cite{Han1994} \cite{Poor1995} \cite{Harremoes2001} 
\cite{Erdogmus2004}\cite{Vajda2007} \cite{Morales2010}\cite{Ho2010}. 
However, we consider two issues that have received less 
attention in these studies:

\begin{itemize}
\item[I.] What are the closed-form relations between each bound and 
joint distributions in a diagram of entropy and error probability?   %
\item[II.] What are the lower and upper bounds in terms of the non-Bayesian errors 
if a non-Bayesian rule is applied in the information processing? %
\end{itemize}

The first issue implies a need for a complete set 
of interpretations to the bounds in relation to joint distributions,
so that both error probability and its error components are known
for interpretations. 
We will discuss the reasons of the need in the later sections of this paper.  
Up to now, most existing studies derived the bounds through an 
inequality means without using joint distribution information.
Therefore, their bounds are not described by a 
generic relation to joint distributions. 
Using the truncated-distribution approach, 
a significant study by Ho and Verd\'{u} \cite{Ho2010}
was reported recently on established the relations
for general cases of variables with 
finite alphabets and countably infinite alphabets.
Regarding the second issue, to our best knowledge, it seems that
no study is shown in open literature 
on the bounds in terms of the non-Bayesian errors.
We will define the Bayesian and non-Bayesian errors in Section III.
The non-Bayesian errors are also of importance because
most classifications are realized
within this category.

The issues above form the motivation behind this work.
We take binary classifications as a problem
background since it is more common and understandable
from our daily-life experiences. 
Moreover, we intend to simplify settings within a binary state
and Shannon entropy definitions for a case study from 
an expectation that the central 
principle of the approach is well highlighted by simple examples. 
The novel contribution of
the present work is given from the following three aspects:

\begin{itemize}
\item[I.] A new approach is proposed for deriving bounds 
directly through the optimization process based on a 
joint distribution, which is significantly different from all other
existing approaches. One advantage of using the approach 
is a possible solution of closed-form expressions
to the bounds.
\item[II.] A new upper bound in a diagram of 
``Error Probability vs. Conditional Entropy''
for the Bayesian errors is derived with a closed-form
expression in the binary state, which is not reported before. 
The new bound is generally tighter than Kovalevskij's upper bound.%
\item[III.] The comparison study on the bounds in terms of the Bayesian and non-Bayesian errors 
are made in the binary state. The connections of bounds are explored for a first time 
between two types of errors. %
\end{itemize}

In the first aspect, we also conduct the actual derivation 
using a symbolic software tool, 
which presents a standard and comprehensive solution 
in the approach.
The rest of this paper is organized as follows. In Section II, we
present related works on the bounds. For a problem background
of binary classifications, several related definitions 
are given in Section III. 
The bounds are given and discussed for the Bayesian
and non-Bayesian errors in Sections IV and V, respectively.
Interpretations to some key points are presented in 
Section VI. Finally, in Section VII we conclude the work and present
some discussions.
The source code from using symbolic software for 
the derivation is included in Appendixes  A and B.

\section{Related Works}
Two important bounds are introduced first, which form the baselines for the
comparisons with the new bounds. 
They were both derived from inequality conditions\cite{Fano1961}\cite{Kovalevskij1965}. 
Suppose the random variables $X$ and $Y$ representing input and output messages 
(out of $m$ possible messages), 
and the conditional entropy $H(X|Y)$ representing the average amount of information lost
on $X$ when given $Y$.
Fano's lower bound \cite{Fano1961} is given in a form of:
\setcounter {equation} {1}
$$ H(X|Y) \leq  H(P_e) + P_e {log_2 (m-1)},  \eqno (1) $$
where $P_e$ is the {\it error probability} (sometimes, 
also called {\it error rate}
or {\it error} for short), and $H(P_e)$ is the binary 
entropy function defined by \cite{shannon1948}:
$$ H(P_e) = - P_e log_2 P_e - (1- P_e)log_2 (1-P_e). \eqno (2) $$
The base of the logarithm is 2 so that the units are {\it bits}.

The upper bound is given by Kovalevskij \cite{Kovalevskij1965}
in a piecewise linear form \cite{Tebbe1968}:
\setcounter {equation} {2}
\begin{equation}
\label{equ:}
\begin{array}{r@{\quad}l}
&  H(X|Y) \geq  log_2 k +k(k+1)(log_2 \frac{k+1}{k})(P_e - \frac{k-1}{k}), \\
& ~~~~~~~~~~~~~~~~~~~~~~~~~~~~~~~~~~~~ and ~ ~k < m,~ m \geq 2,
\end{array} 
\end{equation}
where $k$ is a positive integer number, but defined to be smaller than $m$.
For a binary classification ($m=2$), Fano-Kovalevskij bounds become:
$$ H^{-1}(P_e) \leq P_e \leq \frac{H(X|Y)}{2}, \eqno (4) $$
where $H^{-1}(P_e)$ is an inverse of $H(P_e)$.
Feder and Merhav \cite{Feder1992} depicted bounds of eq. (4)
and presented interpretations on the two specific points from 
the background of data compression problems. 

Studies from the different perspectives have been reported on the bounds 
between error probability and entropy. The initial
difference is made from the entropy definitions, such as Shannon
entropy in \cite{Chen1971}\cite{Golic1987}\cite{wang2009}\cite{hu2012},
and R\'{e}nyi entropy in \cite{Feder1994}\cite{Vajda2007}\cite{Morales2010}.  
The second difference is the selection of bound relations, such
as ``$P_e$ vs. $H(X|Y)$'' in \cite{Chen1971}\cite{Feder1992}, ``$H(X|Y)$ vs. $P_e $'' 
in \cite{Golic1987} \cite{Feder1994}\cite{Vajda2007}\cite{Morales2010}\cite{Ho2010},
``$P_e$ vs. $MI(X,Y)$'' in \cite{Eriksson2005}\cite{Fisher2009}, and
``$NMI(X,Y)$ vs. $A$'' in \cite{wang2009}, 
where $A$ is the accuracy rate, $MI(X,Y)$ and $NMI(X,Y)$ are the mutual information and
normalized mutual information between variables $X$ and $Y$, respectively.
Another important study is made on the tightness of 
bounds. Several investigations \cite{Taneja1985} \cite{Poor1995}
\cite{Erdogmus2004} \cite{Ho2010} have been reported on
the improvement of bound tightness. Recently, a study in \cite{hu2012}
suggested that an upper bound from the Bayesian errors should be 
added, which is generally neglected in the bound analysis. 

\section{Binary Classifications and Related Definitions}

Classifications can be viewed as one component in pattern 
recognition systems \cite{duda2001}. 
Fig. 1 shows a schematic diagram of the pattern recognition systems. 
The first unit in the
systems is termed {\it representation} in the present problem background, 
but called {\it encoder} in communication background. 
This unit processes the tasks of {\it feature selection}, 
or {\it feature extraction}.
The second unit is called {\it classification} or {\it classifier} in applications.
Three sets of variables are involved in the systems, namely, {\it target
variable} $T$, {\it feature variables} $X$, and 
{\it prediction variable} $Y$. While $T$ and $Y$ are
univariate discrete random variables for representing labels of the samples,
$X$ can be high-dimension random variables either in forms of 
discrete, continuous, or their combinations. 
\begin{figure}
\centering
\includegraphics[width=3.3in]{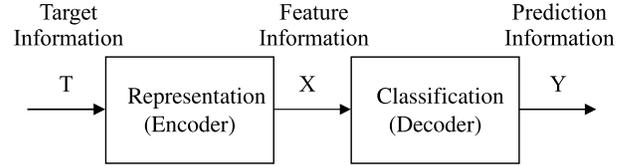}
\caption{Schematic diagram of the pattern recognition systems 
(modifications on FIGURE 1.7 in \cite{duda2001}).   
}
\label{fig:1a}
\end{figure}

In this work, binary classifications are considered as a case study 
because they are more fundamental in applications. 
Sometimes, multiclass classifications 
are processed by binary classifiers \cite{Cristianini2000}.
In this section, we will present several necessary definitions 
for the present case study. 
Let $\textbf{x}$ be a 
random sample satisfying $\textbf{x} \in \mathcal{X} \subset {R^d}$, which is 
in a $d$-dimensional feature space and will be classified. 
The true (or target) state $t$ of $\textbf{x}$ is within the finite set of 
two classes, $t \in \mathcal{T} = \{t_{1}, t_{2}\}$, 
and the prediction (or output) state $y=f(\textbf{x})$ is 
within the two classes, $y \in \mathcal{Y} = \{y_{1}, y_{2}\}$, 
where $f$ is a function for classifications.
Let $p(t_{i})$ be the {\it prior probability} of class $t_{i}$ 
and $p(\textbf{x}|t_{i})$ be the {\it conditional probability density function}
(or {\it conditional probability}) of $\textbf{x}$ given that it 
belongs to class $t_{i}$.

{\it Definition 1:} {\it (Bayesian error in binary classification)} 
In a binary classification, the {\it Bayesian error}, denoted by $P_e$, is defined 
by \cite{duda2001}:
\setcounter {equation} {4}
\begin{equation}
 \label{equ:}
P_e=\int \limits_{R_2} 
p(t_1|\textbf{x})p(t_1)d\textbf{x}
+ \int \limits_{R_1} p(t_2|\textbf{x})p(t_2)d\textbf{x},
\end{equation}
where $R_i$ is the {\it decision region} for class $t_{i}$. The two regions 
are determined by the Bayesian rule:
\begin{equation}
\label{equ:}
 \begin{array}{r@{\quad}l}
& \mathrm{Decide} ~ R_1 ~ \mathrm{if} ~ \dfrac{p(\textbf{x}|t_1)p(t_1)}{p(\textbf{x}|t_2)p(t_2)} \ge 1,\\
& \mathrm{Decide} ~ R_2 ~ \mathrm{if} ~ \dfrac{p(\textbf{x}|t_1)p(t_1)}{p(\textbf{x}|t_2)p(t_2)} < 1,
\end{array} 
\end{equation}
In statistical classifications, the Bayesian error 
is the {\it theoretically lowest} probability of error \cite{duda2001}.  

{\it Definition 2:} {\it (Non-Bayesian error)} 
The {\it non-Bayesian error}, denoted by $P_E$, is defined to be 
any error which is larger than the Bayesian error, that is:
\begin{equation}
 \label{equ:}
P_E > P_e,
\end{equation}
for the given information of $p(t_{i})$ and $p(\textbf{x}|t_{i})$.

\begin{remark}
Based on the definitions above, for the given joint distribution
the Bayesian error is unique, but
the non-Bayesian errors are multiple. Fig. 2 shows the 
Bayesian {\it decision boundary}, $x_b$, on a
univariate feature variable $x$ for equal priors. The 
Bayesian error is $P_e= e_1+e_2$.
Any other decision boundary different from $x_b$
will generate the non-Bayesian error
for $P_E>P_e$. 
\end{remark}

\begin{figure}
\centering
\includegraphics[width=3.4in,height=1.9in]{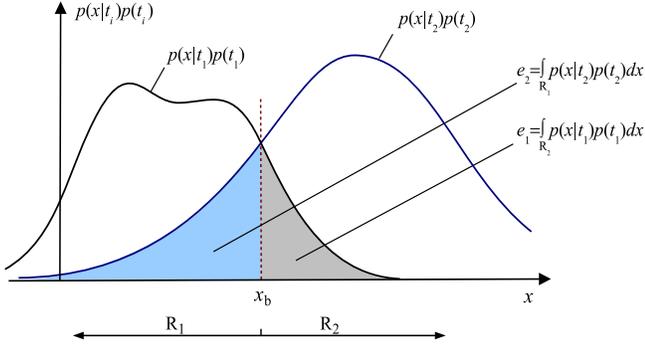}
\caption{ Bayesian decision boundary $x_b$ for equal priors $p(t_i)$
in a binary classification
(modifications on FIGURE 2.17 in \cite{duda2001}).   
}
\label{fig:1b}
\end{figure}

In a binary classification, the {\it joint distribution},
$p(t,y)=p(t=t_i,y=y_j)=p_{ij}$,
is given in a general form of:
$$\setlength\arraycolsep{0.1em}
 \begin{array}{r@{\quad}l}
  & p_{11}=p_1-e_1, ~  p_{12}=e_1,\\
   & p_{21}=e_2, ~~~~~~~  p_{22}=p_2-e_2, 
 \end{array}
\eqno (8) 
$$
where $p_1=p(t_1)$ and $p_2=p(t_2)$ are the prior 
probabilities of Class 1 and Class 2, respectively;
their associated errors 
(also called {\it error components})
are denoted by $e_1$ and $e_2$.
Fig. 3 shows a graphic diagram of the probability transformation
between target variable $T$ and perdition variable $Y$ 
via their joint distribution $p(t,y)$ in a binary
classification. 
The constraints in eq. (8) are given by \cite{duda2001}:
$$\setlength\arraycolsep{0.1em}
 \begin{array}{r@{\quad}l}
& 0 < p_1 <1, ~~ 0 < p_2 <1, ~ p_1+p_2=1 \\
& 0 \leq e_1 \leq p_1, ~ 0 \leq e_2 \leq p_2. 
\end{array} 
\eqno (9) 
$$

\begin{figure}
\centering
\includegraphics[width=2.0in]{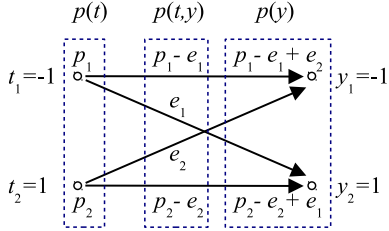}
\caption{Graphic diagram of the probability transformation
between variables $T$ and $Y$ 
in a binary classification.
}
\label{fig:1c}
\end{figure}

In this work, we use $e$ to denote error probability, or error variable, 
for representing either the Bayesian error or non-Bayesian error. 
They are calculated from the same formula:
$$ e (P_e, ~ or~ P_E)= e_{1}+e_{2}. \eqno (10) $$

{\it Definition 3:} {\it (Minimum and maximum error bounds in 
binary classifications)} 
Classifications suggest the minimum error bound as:
$$ (P_E)_{min} = (P_e)_{min} = 0, \eqno (11) $$
where the subscript {\it min} denotes the minimum value.
The maximum error bound for the Bayesian error in binary classifications is 
\cite{hu2012}:
$$ (P_e)_{max} = p_{min} =min \{p_1,p_2 \}, \eqno (12) $$
where the symbol {\it min} denotes 
a {\it minimum} operation.
For the non-Bayesian error, its maximum error bound
becomes 
$$ (P_E)_{max} = 1. \eqno (13) $$

\begin{remark}
For a given set of joint distributions in the bound studies, 
one may fail to tell if it
is the solution from using 
the Bayesian rule or not. For simplification, 
we distinguish the set to be one for the Bayesian errors if an error 
rate $e$ always satisfies 
the relation of $e \le p_{min}$.
Otherwise, it is a set for the non-Bayesian errors. 
\end{remark}

In a binary classification, the
{\it conditional entropy},  $H(T|Y)$, is calculated
from the joint distribution in (8):
$$\setlength\arraycolsep{0.1em}
 \begin{array}{r@{\quad}l}
 H(T|Y)  = & H(T)-MI(T,Y)\\
 = & -p_{1}log_2 p_1-p_{2}log_2 p_2 \\
& -{e_1}log_2 \frac{e_1}{(p_2+e_1-e_2)p_1} \\
& -{e_2}log_2  \frac{e_2}{(p_1-e_1+e_2)p_2} \\ 
& -{(p_1-e_1)}log_2  \frac{(p_1-e_1)}{(p_1-e_1+e_2)p_1} \\ 
& -{(p_2-e_2)}log_2  \frac{(p_2-e_2)}{(p_2+e_1-e_2)p_2}, ~~
\end{array} 
\eqno (14) 
$$
where $H(T)$ is a {\it binary entropy} of the random variable $T$,
and $MI(T, Y)$ is 
{\it mutual information} between variables $T$ and $Y$.

\begin{remark}
When a joint distribution $p(t,y)$ is given, its associated
conditional entropy $H(T|Y)$ is uniquely determined. 
However, for the given $H(T|Y)$, it is generally unable to reach
a unique solution to $p(t,y)$, but mostly multiple solutions shown
later in this work.
\end{remark}

{\it Definition 4:} {\it (Admissible point, admissible set, and their properties
in diagram of entropy and error probability)} 
In a given diagram of entropy and error probability, 
if a point in the diagram is possibly to be realized from 
a non-empty set of joint distributions for the given classification information,
it is defined to be an {\it admissible point}. 
Otherwise, it is a {\it non-admissible point}.
All  admissible points will form an {\it admissible set}
(or {\it admissible region(s)}),
which is enclosed by the bounds (also called {\it boundary}).
If every point located on the boundary is admissible (or non-admissible), 
we call this admissible set {\it closed} (or {\it open}). 
If only a partial portion of boundary points   
is admissible, the set is said {\it partially closed}. 
For an admissible point with the given conditions, 
if it is realized only by a unique joint 
distribution, it is called a {\it one-to-one mapping} point. If more
than one joint distribution is associated to the same admissible point,
it is called  a {\it one-to-many mapping} point. 

We consider that classifications present
an exemplary justification of raising the first issue
in Section I about the bound studies.
The main reason behind the issue is that a single index 
of error probability may not be sufficient for  
dealing with classification problems. 
For example, when processing class-imbalance problems 
\cite{He2009}\cite{Sun2009},
we need to distinguish
{\it error types}. In other words, 
for the same error probability $e$ (or even the
same admissible point), we are required to know the error
components of $e_1$ and $e_2$ as well. 
Suppose one encounters a medical diagnosis problem, where $p_1$ generally represents
the {\it majority class} for {\it healthy} persons
(labeled with {\it negative} or  {-1} in Fig. 3), and $p_2$ the {\it minority
class} for {\it abnormal} persons (labeled with {\it positive} or  {1}). 
A class-imbalance problem
is then formed. While $e_1$ (also called 
{\it type I error} ) is tolerable, $e_2$ (or {\it type II error}) seems
intolerable because abnormal persons are considered to be ``{\it healthy}''.
Hence, from either theoretical or application viewpoint, 
it is necessary for establishing relations between bounds and joint distributions,
which can provide error type information within error probability 
for better interpretations to the bounds.   

\section{Upper and lower bounds for Bayesian errors}

In this work, we select the bound relations between
entropy and error probability. Furthermore, The bounds 
and their associated error components 
are also given by the following two theorems
in a context of binary classifications.

\begin{theorem} {\it (Lower bound and 
associated error components)}
The lower bound in a diagram of ``$P_e$ vs. $H(T|Y)$'' 
and the associated error components are given by:
\begin{equation}\tag{15a}
\label{equ:15a}
P_e \geq min \{ 0, G_1(H(T|Y)) \},
\end{equation}
\begin{equation}\tag{15b}
\label{equ:15b}
\begin{array}{r@{\quad}l}
for ~ G_1^{-1}(P_e) & = H(T|Y)\\
& = - P_e log_2 P_e - (1- P_e)log_2 (1-P_e),\\
P_e   & =e_1+e_2 \le p_{min},
\end{array} 
\end{equation}

\begin{equation}\tag{15c}
(e_1, e_2) = \left\{ \begin{array}{r@{}l}
& (0.5,0) ~~ or ~~ (0, 0.5), ~~~~~~~~~  if ~  P_e   = 0.5,\\ \nonumber
& (\frac {P_e(1-p_1-P_e)} {1-2P_e},\frac {P_e(p_1-P_e)} {1-2P_e}), 
~~~ otherwise, \nonumber
\end{array} \right.
\end{equation}
where $H(T|Y)$ is the conditional entropy of 
of $T$ when given $Y$,
and $G_1$ is called the {\it lower bound function} 
(or {\it lower bound}). However, one can only achieve the closed-form solution on its inverse 
function, $G_1^{-1} (\cdot)$, not on itself.  
\end{theorem}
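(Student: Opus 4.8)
\emph{Proof proposal.} The statement is a claim about a constrained optimization: among all joint distributions (8) obeying the constraints (9) and the Bayesian restriction $P_e=e_1+e_2\le p_{min}$, the smallest error probability compatible with a prescribed value $h$ of the conditional entropy $H(T|Y)$ equals the lower branch of the inverse of the binary entropy function, and the distributions attaining it are those listed in (15c). The plan is therefore: fix the classification data (in particular the prior $p_1$); minimize $P_e=e_1+e_2$ over $(e_1,e_2)$ in the feasible box of (9) intersected with $\{e_1+e_2\le p_{min}\}$, subject to the equality constraint that the expression (14) for $H(T|Y)$ equals $h$; and read off the optimal value as a function of $h$. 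Formula (15a) then simply records this value together with the trivial bound $P_e\ge 0$.

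\textbf{Deriving the candidate by Lagrange stationarity.} Introduce $\mathcal{L}=e_1+e_2+\lambda\,(H(T|Y)-h)$. Since the cell probabilities $p_{ij}$ and the marginals $p(y_j)$ in (8) are affine in $e_1,e_2$, a direct (symbolic, cf.\ Appendix A) differentiation of (14) yields $\partial H(T|Y)/\partial e_1=\log_2\!\big(p(t_1|y_1)/p(t_1|y_2)\big)$ and $\partial H(T|Y)/\partial e_2=\log_2\!\big(p(t_2|y_2)/p(t_2|y_1)\big)$. The objective has gradient $(1,1)\neq 0$, so at a stationary point $\lambda\neq 0$ and the two derivatives coincide; exponentiating, $p(t_1|y_1)\,p(t_2|y_1)=p(t_1|y_2)\,p(t_2|y_2)$, which with $a=p(t_1|y_1)$ and $b=p(t_1|y_2)$ factors as $(a-b)(1-a-b)=0$. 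The branch $a=b$ forces $T\perp Y$, hence realizes only the single entropy level $H(p_1)$ with $P_e\ge p_{min}$, and so never beats the curve. On the branch $a+b=1$ one has $p(t_1|y_1)=p(t_2|y_2)$ and $p(t_1|y_2)=p(t_2|y_1)$, a ``reversed binary symmetric channel'': identifying its crossover probability with $e_1+e_2=P_e$, every posterior has entropy $H(P_e)$ and $\sum_j p(y_j)=1$, whence $H(T|Y)=H(P_e)$ --- exactly the identity $G_1^{-1}(P_e)=H(P_e)$ of (15b). Solving $p(t_1|y_1)=1-P_e$ together with $e_1+e_2=P_e$ for $(e_1,e_2)$ in terms of $P_e$ and $p_1$ then produces (15c), with $1-2P_e$ appearing as the denominator.

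\textbf{Certifying the minimum, feasibility, and the degenerate case.} To show that no feasible point does better I would invoke the strict concavity of the binary entropy function. With $e(y)=1-\max_i p(t_i|y)$, one has $P_e\ge\sum_j p(y_j)e(y_j)$ (equality exactly in the Bayes-optimal case), $\sum_j p(y_j)e(y_j)\le p_{min}\le\tfrac12$, and $H(T|Y)=\sum_j p(y_j)H(e(y_j))\le H\!\big(\sum_j p(y_j)e(y_j)\big)\le H(P_e)$, the last inequality using monotonicity of $H$ on $[0,\tfrac12]$; inverting on that branch gives $P_e\ge H^{-1}(h)=G_1(h)$. Conversely, substituting (15c) into (8) and checking (9) shows the reversed-symmetric-channel distribution is legitimate precisely when $P_e\le p_{min}$, i.e.\ exactly throughout the Bayesian regime, so the bound is attained for every admissible $h$ and is therefore exact (the relevant part of the boundary being closed). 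The value $P_e=\tfrac12$ must be handled by hand: then $p_{min}\le\tfrac12$ forces $p_1=p_2=\tfrac12$ and $H(T|Y)=1$, the denominator $1-2P_e$ vanishes, the attaining distributions are the product distributions with $T$ and $Y$ independent, the error components are no longer unique, and one records the extreme representatives $(e_1,e_2)=(\tfrac12,0)$ or $(0,\tfrac12)$. Finally, $G_1^{-1}=H$ is closed-form while $G_1=H^{-1}$ is not, since $-p\log_2 p-(1-p)\log_2(1-p)=h$ admits no elementary solution for $p$ --- the last assertion of the theorem.

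\textbf{Main obstacle.} The heavy step is the stationarity computation: differentiating the logarithmic terms of (14) with respect to $e_1$ and $e_2$ and reducing $\partial_{e_1}H=\partial_{e_2}H$ to the clean factorization $(a-b)(1-a-b)=0$ --- which is precisely why symbolic software is enlisted. The second delicate point is upgrading the first-order candidate to a certified global minimum over the constrained box: I would lean on the independent concavity argument rather than a second-order test, and one must verify that the inequalities of (9) together with $e_1+e_2\le p_{min}$ are exactly the conditions under which the constructed joint distribution is admissible, so that the derived curve is genuinely an exact bound and not merely a lower estimate.
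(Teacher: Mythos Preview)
Your argument is correct, and it differs from the paper's in two substantive ways.

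First, the paper works in the dual direction: for fixed $P_e$ it \emph{maximizes} $H(T|Y)$ over the pair $(p_1,e_1)$ (with $e_2=P_e-e_1$), sets $\partial H/\partial e_1=\partial H/\partial p_1=0$, and obtains three candidate families (19a)--(19c) in a form that is not immediately interpretable. You instead fix the prior $p_1$, minimize $P_e$ under $H(T|Y)=h$ via a Lagrange condition in $(e_1,e_2)$, and reduce the stationarity equation to the transparent factorization $(a-b)(1-a-b)=0$ in the posteriors. This immediately isolates the ``reversed binary symmetric channel'' branch, from which both (15b) and (15c) fall out by elementary algebra; the paper reaches the same endpoint but only after symbolic manipulation in Maple.

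Second, and more importantly, the paper does \emph{not} certify global optimality analytically: it states that the second-order analysis ``seems intractable'' and instead confirms concavity of $H(T|Y)$ in $e_1$ numerically, by plotting for sampled values of $p_1$ and $P_e$ (Appendix~A). Your Jensen-type argument, $H(T|Y)=\sum_j p(y_j)H(e(y_j))\le H\bigl(\sum_j p(y_j)e(y_j)\bigr)=H(P_e)$ with $P_e\le p_{min}\le\tfrac12$, is a genuine analytic proof of the bound that the paper lacks; it replaces the semi-analytical step entirely. So your route is both more elementary and more rigorous. The price is that the paper's framing, by treating $p_1$ as a variable, makes explicit that the lower-bound curve is the same for every prior---a fact your argument recovers a posteriori by noting that $H(P_e)$ does not involve $p_1$ and that the construction in (15c) is feasible for every $p_1$ with $P_e\le p_{min}$.

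Your handling of $P_e=\tfrac12$ is also sharper than the statement: you correctly observe that the attaining joint is then any product distribution with $p_1=p_2=\tfrac12$, so $(e_1,e_2)$ ranges over the whole segment $e_1+e_2=\tfrac12$, and (15c) lists only the two endpoints.
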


\begin{proof}
Based on eq. (14), the lower bound function is derived from the following 
definition:
$$\setlength\arraycolsep{0.1em}
 \begin{array}{r@{\quad}l}
& G_1^{-1}(e)  = arg \max \limits_{e}H(T|Y),\\
& ~~ subject ~ to ~ eqs. ~ (9) ~and~ (10), 
\end{array} 
\eqno (16) 
$$
where we take $e$ for the 
input variable in the derivations. 
Eq. (16) describes the function of the maximum 
$H(T|Y)$ with respect to $e$, and the function
needs to satisfy the general constraints of joint 
distributions in eq. (9). 
$H(T|Y)$ seems to be governed by the four variables
from $p_i$ and $e_i$ in eq. (14). 
However, only two independent parameter variables determine the 
solutions of (14) and (16). The variable reduction 
from four to two is due to the two specific
constrains imposed between parameters,  
that is, $p_1+p_2=1$ and $e_1+e_2=e$.
When we set $p_1$ and $e_1$ as two independent variables,
eq. (16) is then equivalent to solving the following problem:
$$\setlength\arraycolsep{0.1em}
 \begin{array}{r@{\quad}l}
& G_1^{-1}(p_1, e_1) = arg \max \limits_{e=P_e}H(T|Y),\\
& ~~ subject ~ to ~ eqs. ~ (9) ~and~ (10). 
\end{array} 
\eqno (17) 
$$
$G_1^{-1}(p_1, e_1)$ is a continuous and differentiable 
function with respect to the two variables. 
A differential approach is applied analytically for searching 
the {\it critical points}
of the optimizations in eq. (17). We achieve the two differential equations 
below and set them to be zeros:
\begin{equation}\tag{18}
\left\{ \begin{array}{r@{}l}
& \frac {\partial H(T|Y)} {\partial e_1}= 
log_2 \frac {(p_1-e_1)(P_e-e_1)(1+2e_1-p_1-P_e)^2} 
{e_1(1+e_1-p_1-P_e)(p_1+P_e-2e_1)^2} =0,
\\ \nonumber
& \frac {\partial H(T|Y)} {\partial p_1}= log_2 \frac {(p_1-2e_1+P_e)(1+e_1-p_1-P_e)} 
{((p_1-e_1)(1+2e_1-p_1-P_e))} =0 . \nonumber
\end{array} \right.
\end{equation}
By solving them simultaneously, we obtain the three
pairs of the critical points through analytical derivations:  
\begin{equation}\tag{19a}
\left\{ \begin{array}{r@{}l}
& e_1= \frac {P_e(1-p_1-P_e)} {1-2P_e} , 
\\ \nonumber
& p_1= 
\frac {P_e+2e_1P_e-e_1-P_e^2} {P_e},
~~~~~~~~~~~~~~~~~~~~~~~~~~~~~
~~~ ~~ \nonumber
\end{array} \right.
\end{equation}
\begin{equation}\tag{19b}
\left\{ \begin{array}{r@{}l}
& e_1= \frac {p_1(p_1+P_e-1)} {2P_1-1} , 
\\ \nonumber
& p_1= 
\frac {1-P_e} {2} +e_1+\frac {1} {2} 
\sqrt{1+P_e^2+4e_1^2-4e_1P_e-2P_e}, \nonumber
\end{array} \right.
\end{equation}
\begin{equation}\tag{19c}
\left\{ \begin{array}{r@{}l}
& e_1= \frac {p_1(p_1+P_e-1)} {2P_1-1} , 
\\ \nonumber
& p_1= 
\frac {1-P_e} {2} +e_1-\frac {1} {2} 
\sqrt{1+P_e^2+4e_1^2-4e_1P_e-2P_e}. \nonumber
\end{array} \right.
\end{equation}
The highest order of each variable, $e_1$ and $p_1$, in eq. (18) is 
four. However, we can see the component 
within the first function in eq. (18), $(1+2e_1-p_1-P_e)^2$, 
will degenerate the total solution order from four to three.  
Therefore, the three pairs of critical points exhibit a complete 
set of {\it possible solutions} to the 
problem in eq. (17). The {\it final solution} should be the pair(s) that satisfies 
both the maximum 
$H(T|Y)$ with respect to $e_1$ for the
given $e=P_e$ and the constraints.
Due to high complexity of the nonlinearity
of the second-order partial differential equations on $H(T|Y)$, 
it seems intractable to examine the three pairs 
analytically for the final solution.  

To overcome the difficulty above, we apply a symbolic software tool, 
Maple\texttrademark 9.5
(a registered trademark of Waterloo Maple, Inc.), 
for a {\it semi-analytical} solution to the problem (see Maple code 
in Appendix A).  
For simplicity and without loss of generality in classifications, 
we consider $p_1$ and $P_e$ are known constants in the function. 
The concavity property of $H(T |Y )$ with respect to $e_1$ 
in the ranges defined in eq. (9)
is confirmed numerically by varying data on $p_1$ and $P_e$. 
A single maximum solution on $H(T|Y)$ is always obtained, 
but it is described by the two sets of $e_1$ 
in (19) alternatively in different conditions 
of $p_1$ and $P_e$.
\end{proof}

\begin{remark}
Theorem 1 achieves the same lower bound found 
by Fano \cite{Fano1961} (Fig. 4), which is general for finite alphabets
(or multiclass classifications). One specific relation to 
Fano's bound is given by the {\it marginal 
probability} (see eq. (2-144) in \cite{cover2006}):
\begin{equation}\tag{20}
\label{equ:20}
\begin{array}{r@{\quad}l}
p(y)= (1-P_e, \frac {P_e} {m-1}, ..., \frac {P_e} {m-1}),
\end{array} 
\end{equation}
which is termed {\it sharp} for attaining equality 
in eq. (1) \cite{cover2006}. 
We call Fano's bound an {\it exact} lower bound because every
point on it is sharp.
The sharp conditions in terms of error components 
in (15c) are a special case of the study in \cite{Ho2010}, 
and can be derived directly from their Theorem 1. 
\end{remark}

\begin{theorem} {\it (Upper bound and associated error components)}
The upper bound and
the associated error components are given by:
\begin{equation}\tag{21a}
\label{equ:21a}
P_e \leq min \{ p_{min}, G_2(H(T|Y)) \},
\end{equation}
\begin{equation}\tag{21b}
\label{equ:21b}
\begin{array}{r@{\quad}l}
for ~ G_2^{-1}(e) & = H(T|Y)\\
& =-p_{min}log_2 \frac{p_{min}}{P_e+p_{min}}-P_elog_2 
\frac{P_e}{P_e+p_{min}}, ~~
\end{array} 
\end{equation}
\begin{equation}\tag{21c}
\label{equ:21c}
\begin{array}{r@{\quad}l}
~~~~~ and  ~ P_e  &  =e_1+e_2 \le p_{min}, \\
 ~ e_i &  = p_{j}, ~ e_j= 0, ~ p_i \geq p_j, ~ i\neq j, ~ i,j=1,2
\end{array} 
\end{equation}
where $G_2$ is called the {\it upper bound function}
(or {\it upper bound}).
Again, the closed-form solution can be achieved only on its inverse 
function of $G_2^{-1} (\cdot)$.
\end{theorem}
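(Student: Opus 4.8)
The plan is to mirror the variational argument of Theorem~1, except that $H(T|Y)$ is now \emph{minimized} over the joint distribution rather than maximized. I would define the inverse upper-bound function $G_2^{-1}(e)$ as the minimum of $H(T|Y)$ in eq.~(14) subject to eqs.~(9), (10) and the Bayesian restriction $e \le p_{min}$, treating $p_1$ (and hence $p_{min}$) and $e = P_e$ as known constants, so that $e_1$ is the single free variable with $e_2 = P_e - e_1$. The constraints $0 \le e_1 \le p_1$, $0 \le e_2 \le p_2$ and $P_e \le p_{min}$ confine $e_1$ to the closed interval $[0, P_e]$.

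The first key step reuses the structural fact already established in the proof of Theorem~1: on the feasible range, $H(T|Y)$ is concave in $e_1$ (I would re-verify this with the same symbolic/numerical computation, placed in Appendix~B). Where Theorem~1 exploited concavity to locate the maximizing \emph{interior} critical point, here concavity forces the minimum onto an \emph{endpoint} of $[0,P_e]$, so the problem reduces to comparing $H(T|Y)$ at $e_1=0$ (all error mass on Class~2) with its value at $e_1=P_e$ (all error mass on Class~1). In either case one entry of a column of the joint distribution in eq.~(8) vanishes, so one of the outcomes $y_1,y_2$ determines $T$ with certainty and contributes zero to $H(T|Y)$, while the other contributes a single scaled binary-entropy term. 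Substituting into eq.~(14) and simplifying gives $\left. H(T|Y) \right|_{e_1=0} = g(p_1)$ and $\left. H(T|Y) \right|_{e_1=P_e} = g(p_2)$, where $g(q) = -P_e \, log_2 \frac{P_e}{P_e+q} - q \, log_2 \frac{q}{P_e+q}$.

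Next I would show that $g$ is strictly increasing in $q$: a one-line differentiation gives $g'(q) \propto log_2 \frac{P_e+q}{q} > 0$. Hence the smaller of the two endpoint values is $g(p_{min})$, attained by concentrating the whole error on the majority class, which yields both the closed form~(21b) for $G_2^{-1}$ and the error-component description~(21c). The same monotonicity argument applied to $G_2^{-1}$ as a function of $P_e$ shows that it is a strictly increasing bijection of $[0,p_{min}]$ onto $[0,2p_{min}]$, so it inverts to a function $G_2$ defined on $[0,2p_{min}]$ --- only implicitly, since the inversion is transcendental. Extending $G_2$ by its maximal value $p_{min}$ for arguments above $2p_{min}$ and intersecting with the elementary ceiling $P_e \le p_{min}$ of eq.~(12) produces exactly $P_e \le \min\{p_{min}, G_2(H(T|Y))\}$; a short comparison of $g(p_{min})$ with $2P_e$ on $[0,p_{min}]$ then gives the companion claim that the new bound lies below Kovalevskij's $P_e \le H(T|Y)/2$.

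The main obstacle, exactly as in Theorem~1, is the concavity claim: the second derivative of $H(T|Y)$ with respect to $e_1$ is a cumbersome rational--logarithmic expression, so a fully analytic proof that it stays nonpositive over the entire feasible range is awkward, and I would lean on the semi-analytic verification via Maple. A secondary, purely expository difficulty is that $G_2$ has no elementary closed form, so the statement and the inversion step must be phrased through $G_2^{-1}$, and one must check that the outer $\min$ with $p_{min}$ correctly patches the bound curve where $H(T|Y)$ exceeds $2p_{min}$.
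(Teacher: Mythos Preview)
Your proposal follows essentially the same route as the paper: define $G_2^{-1}$ by minimizing $H(T|Y)$ at fixed $P_e$, invoke concavity in $e_1$ (verified semi-analytically via Maple, as in Appendix~B) to push the minimum to the endpoints $e_1\in\{0,P_e\}$, and then select the smaller endpoint value. Your monotonicity argument for $g(q)$ is a clean analytic strengthening of the paper's treatment, which simply compares the two endpoint values numerically in the Maple code rather than proving $g(p_{min})\le g(p_{max})$ in general.
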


\begin{proof}
The upper bound function is obtained from
solving the following equation:
$$\setlength\arraycolsep{0.1em}
 \begin{array}{r@{\quad}l}
& G_2^{-1}(p_1, e_1) = arg \min \limits_{e=P_e}H(T|Y),\\
& ~~ subject ~ to ~ eqs. ~ (9) ~and~ (10). 
\end{array} 
\eqno (22) 
$$
Because the concavity
property holds for $H(T |Y )$ with respect to $e_1$ 
for the constraints defined
in eq. (9), the possible solutions of $e_1$ should be located at
the two ending points of its feasible range, $(0, P_e)$.
We can take the point which produces the smaller $H(T |Y )$
as the final solution.
The solution from Maple code shown in Appendix B confirms 
the closed-form expressions in eq. (21).
\end{proof}

\begin{remark}
Theorem 2 describes a novel set of upper bounds which is in general 
{\it tighter} 
than Kovalevskij's bound \cite{Kovalevskij1965} for binary 
classifications (Fig. 4). 
For example, when $p_{min}=0.2$ is given, the upper bounds
defined in eq. (21) shows
a curve ``$O-C$''  plus a  line ``$C-C'$''.
Kovalevskij's upper bound, given by a line 
``$O-C-A$'', is sharp only at Point $O$ and Point $C$.
The solution in eq. (21c) confirms an 
advantage of using the proposed optimization 
approach in derivations so that a closed-form expression 
of the exact bound is possibly achieved.
\end{remark}

In comparison, Kovalevskij's upper bound described in eq. (3) is 
general for multiclass classifications.
This bound misses a general relation to error components
like eq. (21c), although the relation is restricted to a binary state.
For distinguishing from the Kovalevskij's upper bound,
we also call $G_2$ a {\it curved upper bound}. 
The new {\it linear upper bound}, $(P_e)_{max}=p_{min}$, 
shows the maximum error for the Bayesian
decisions in binary classifications \cite{hu2012},
which is also equivalent to the solution of a blind guess
when using the maximum-likelihood decision \cite{duda2001}. 
If $p_1 = p_2$, the upper bound becomes a single curved one.

\begin{figure}
\centering
\includegraphics[width=3.3in]{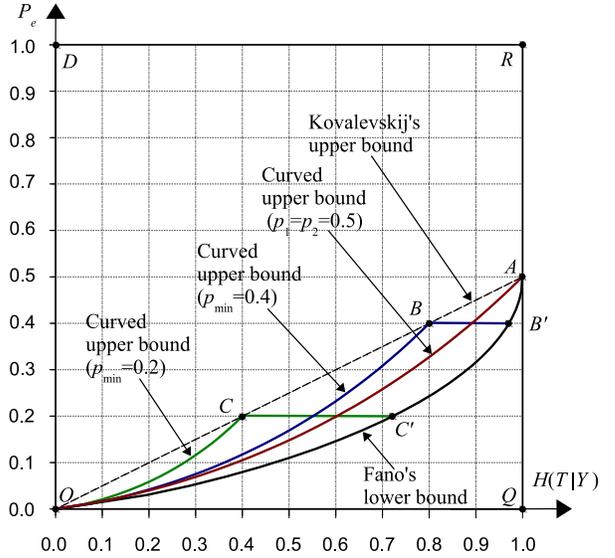}
\caption{Plot of bounds in a ``$P_e$ vs. $H(T|Y)$'' diagram.
}
\label{fig:4}
\end{figure}

\begin{remark}
The lower and upper bounds defined by eqs. (15)  and (21) 
form a closed admissible region in the diagram of ``$P_e$ vs. $H(X|Y)$''. 
The shape of the admissible region
changes depending on a single parameter of $p_{min}$.
\end{remark}

\section{Upper and lower bounds for non-Bayesian errors}

In classification problems, the Bayesian errors
can be realized only if one has the exact
information about all probability distributions of classes. 
The assumption above is generally impossible
in real applications. 
In addition, various classifiers are designed by employing the non-Bayesian rules,
such as the conventional decision trees, artificial neural networks 
and supporting vector machines \cite{duda2001}. 
Therefore, 
the analysis of the non-Bayesian errors presents
significant interests in classification studies.  

\begin{figure*}[!t]
\begin{equation}\tag{24c}
(e_1, e_2) = \left\{ \begin{array}{r@{}l}
& (0.5,0) ~~ or ~~ (0, 0.5), ~~~~~~~~~~~~~  if ~  p_1= p_2 = P_E   = 0.5,\\ \nonumber
& (\frac {P_E(1-p_1-P_E)} {1-2P_E},\frac {P_E(p_1-P_E)} {1-2P_E}), 
~~~~~ if ~ (1-p_1-P_E)(p_1-P_E)(P_E-0.5)>0,\\ \nonumber
& (\frac {p_1(p_1+P_E-1)} {2P_1-1},\frac {(1-p_1)(p_1-P_E)} {2P_1-1}) , 
~~ otherwise, \nonumber
\end{array} \right.
\end{equation}
\end{figure*} 

{\it Definition 5:} {\it (Label-switching in binary classifications)} 
In  binary classifications, a label-switching operation
is an exchange between two labels.
Suppose the original joint distribution 
is denoted by: 
\begin{equation}\tag{23a}
\label{equ:23a}
\begin{array}{r@{\quad}l}
p_A(t,y): & p_{11}=a, ~  p_{12}=b,\\
& p_{21}=c, ~  p_{22}=d. 
\end{array} 
\end{equation}
A label-switching operation will change 
the prediction labels in Fig. 3 to be
$y_1=1$ and $y_2=-1$, and generate the following
joint distribution:
\begin{equation}\tag{23b}
\label{equ:23b}
\begin{array}{r@{\quad}l}
p_B(t,y): & p_{11}=b, ~  p_{12}=a,\\
& p_{21}=d, ~  p_{22}=c. 
\end{array} 
\end{equation} 

{\it Proposition 1:} {\it (Invariant property from label-switching)} 
The related entropy measures, including $H(T)$, 
$H(Y)$, $MI(T,Y)$, and $H(T|Y)$, 
will be invariant to labels, or unchanged 
from a label-switching operation in binary classifications.  
However, the error $e$ will be changed to be $1-e$.
\begin{proof}
Substituting the two sets of joint distributions in eq. (23) into each entropy measure 
formula respectively, one can obtain the same results. The error change is obvious.  
\end{proof}

\begin{theorem} {\it (Lower bound and upper bound for non-Bayesian error
without information of $p_1$ and $p_2$ )}
In a context of binary classifications, 
when information about $p_1$ and $p_2$ is unknown (say, before classifications),
the lower bound and upper bound for the non-Bayesian error
are given by:
\begin{equation}\tag{24a}
\label{equ:24a}
G_1(H(T|Y))\leq P_E \leq 1- G_1(H(T|Y)),
\end{equation}
\begin{equation}\tag{24b}
\label{equ:24b}
\begin{array}{r@{\quad}l}
for ~ G_1^{-1}(P_E) & =H(T|Y)\\
& =- P_E log_2 P_E - (1- P_E)log_2 (1-P_E), \\
P_E   & =e_1+e_2 \le 1,
\end{array} 
\end{equation}
\begin{equation}\tag{24c}
\label{equ:24c}
\begin{array}{r@{\quad}l}
 (see ~ the ~top ~of ~this ~page) ~~ ~~~~~~
\end{array} 
\end{equation}
where we call the upper bound in eq. (24a), $1- G_1(H(T|Y))$,
the {\it general upper bound} (or {\it mirrored lower bound}),
which is a mirror of Fano's lower bound with 
the mirror axis along $P_E=0.5$. 
Both bounds share the same expression for calculating
the associated error components in eq. (24c). 
When $P_E \le 0.5 $, their components, $e_1$ and $e_2$, 
correspond to the lower bound, 
otherwise, to the upper bound.  
\end{theorem}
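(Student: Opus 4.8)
The plan is to obtain the two bounds separately and then read off the error components from the critical-point analysis already performed in the proof of Theorem 1. For the lower bound I would observe that it is produced by exactly the optimization (16)--(17) --- maximizing $H(T|Y)$ over all admissible joint distributions of the form (8) with the error $e = P_E$ held fixed --- the only change from Theorem 1 being that the prior $p_1$ is no longer tied to $p_{min}$ but ranges over all of $(0,1)$, and that $P_E = e_1 + e_2$ may now be as large as $1$ under eq. (9). The key point is that Fano's inequality (1), specialized to $m = 2$, already gives $H(T|Y) \leq H(P_E)$ for every such joint distribution, so that $\max_{e = P_E} H(T|Y) \leq G_1^{-1}(P_E)$; equality is attained because the sharp marginal $p(y) = (1 - P_E, P_E)$ of eq. (20) is realizable by a valid pair $(p_1, e_1)$ for every $P_E \in (0,1)$ (in fact for every $p_1 \in [P_E, 1-P_E]$). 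Inverting $G_1^{-1}$ on its increasing branch $P_E \in (0, 0.5]$ then yields $P_E \geq G_1(H(T|Y))$.

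For the upper bound I would invoke Proposition 1: a label-switching operation leaves $H(T|Y)$ unchanged while sending $P_E \mapsto 1 - P_E$. Hence the set of admissible pairs $(P_E, H(T|Y))$ is symmetric about the axis $P_E = 0.5$, so the upper boundary is the reflection of the lower boundary across that axis, namely $1 - G_1(H(T|Y))$. Equivalently, the problem of maximizing $P_E$ for a fixed $H(T|Y)$ becomes, after a label switch, the lower-bound problem for $1 - P_E$, which by the previous paragraph gives $1 - P_E \geq G_1(H(T|Y))$, i.e. $P_E \leq 1 - G_1(H(T|Y))$. This both establishes the upper bound in (24a) and identifies it as the mirrored Fano bound.

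It remains to justify (24c). Setting $\partial H(T|Y)/\partial e_1 = \partial H(T|Y)/\partial p_1 = 0$ reproduces the system (18), whose complete solution set is the three pairs (19a)--(19c); using the concavity of $H(T|Y)$ in $e_1$ established (semi-analytically) in the proof of Theorem 1, the maximizer for fixed $P_E$ is, according to the values of $p_1$ and $P_E$, either the degenerate point $p_1 = p_2 = P_E = 0.5$, the pair (19a) rewritten with $e_2 = P_E - e_1$, or the pair (19b) rewritten the same way --- these are the three branches listed in (24c). For $P_E \le 0.5$ this joint distribution lies on the lower boundary, so its components are those of the lower bound; applying the label switch of Proposition 1 ($e_1 \leftrightarrow p_1 - e_1$, $e_2 \leftrightarrow p_2 - e_2$) to the analogue at error $1 - P_E$ shows that for $P_E > 0.5$ the same formulas describe the components of the upper bound.

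The main obstacle is precisely this last step: deciding which of the critical pairs (19a)--(19c) is the global maximizer in each region of the $(p_1, P_E)$ parameter plane, and checking that the chosen pair respects the feasibility box of eq. (9) so that the sign test defining the branches of (24c) comes out correctly. Because the second-order conditions on $H(T|Y)$ are strongly nonlinear, this case analysis --- together with the treatment of the degenerate configuration $p_1 = p_2 = P_E = 0.5$ --- is carried out with the aid of the symbolic computation in Appendix A, exactly as was done for Theorem 1.
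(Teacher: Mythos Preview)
Your proposal is correct and follows essentially the same route as the paper: the lower bound is Fano's bound (inherited from Theorem~1), and the upper bound is obtained by the label-switching symmetry of Proposition~1, which reflects the admissible region across $P_E = 0.5$. The paper's own proof is in fact briefer than yours --- it treats the lower bound as already established and argues only the mirroring step --- so your more explicit treatment of attainability in (24b) and of the branch selection in (24c) via the critical pairs (19a)--(19c) simply fills in detail that the paper leaves implicit.
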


\begin{proof}
Suppose an admissible point is located at
the lower bound which shows $P_E \le 0.5 $.
By a label-switching operation, one can obtain
the mirrored 
admissible point at $1-P_E \ge 0.5 $, which is 
located at the mirrored lower bound. 
Proposition 1 suggests both points share the same value of $H(T|Y)$.
Because $P_E$ is the smallest one for the given conditional entropy $H(T|Y)$,
its mirrored point is the biggest one for creating the general upper bound.
\end{proof}

\begin{remark}
Fano's lower bound, its mirror bound, and the 
axis of $P_E$ 
form an admissible region, denoted by a boundary 
``$O-F'-A-F-D-O$'' in Fig. 5, for the non-Bayesian error
when information about $p_1$ and $p_2$ is unknown. 
On the axis of $P_E$, only Points O and D are admissible. Hence,
the admissible region is partially closed.   
\end{remark}

\begin{theorem} {\it (Admissible region(s) for non-Bayesian error
with known information of $p_1$ and $p_2$)}
In binary classifications,
when information about $p_1$ and $p_2$ is known,
a closed admissible region for the non-Bayesian error is 
generally
formed (Fig. 5) by Fano's lower bound, the general upper bound, 
the curved upper bound $G_2^{-1}(\cdot)$,
the {\it mirrored upper bound} of $G_2^{-1}(\cdot)$, and the upper bound $H(T|Y)_{max}$. 
For the $H(T|Y)_{max}$ bound,
its associated error components are given by:
\begin{equation}\tag{25}
\begin{array}{r@{}l}
~for ~ H(T|Y)= H(T|Y)_{max} =H(e=p_{min}),  ~~~~~ ~~~~~  ~~~~~ ~~~
\\ \nonumber
(e_1,e_2)=
\left\{ \begin{array}{r}
(0.25, 0.25),~ ~~~~~ ~~~~~ ~
if ~ p_1= p_2=P_E = 0.5, \\ \nonumber
(\frac {p_1(1-p_1-P_E)} {1-2p_1},\frac {P_E(1-p_1)-p_1(1-p_1)} {1-2p_1}), 
~ otherwise. \nonumber
\end{array} \right.
\end{array} 
\end{equation}
\end{theorem}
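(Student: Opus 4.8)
The plan is to describe the admissible region as the vertical bundle $\bigcup_{0\le P_E\le 1}\{P_E\}\times[\,H_{\min}(P_E),H_{\max}(P_E)\,]$ and then to identify its two envelope curves $H_{\min},H_{\max}$ with the five bounds in the statement. First I would fix $p_1$ (so $p_2=1-p_1$) and $P_E$, and regard $H(T|Y)$ in eq.~(14) as a function of the single free variable $e_1$ on its feasible interval, whose endpoints are the two values of $e_1$ that make one cell of the joint distribution (8) vanish. The concavity of this function in $e_1$ --- stated in the proofs of Theorems~1 and 2 but only checked there numerically --- I would establish by writing $H(T|Y)=\sum_{j}p(y_j)H(T|y_j)$ and noting that each summand has the form $a(e_1)H\!\big(b(e_1)/a(e_1)\big)$ with $a,b$ affine; such a term is concave because its second derivative is a positive multiple of $(a')^2/a-(b')^2/b-(a'-b')^2/(a-b)\le 0$, the Cauchy--Schwarz (quadratic-over-linear) inequality, and a sum of concave functions is concave. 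Concavity then yields: for fixed $P_E$, $H(T|Y)$ attains its maximum over $e_1$ at the interior critical point and its minimum at an endpoint, and by the intermediate value theorem it realizes every value in between, so it suffices to compute the two envelopes.

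For the right envelope $H_{\max}$: on $P_E\le p_{min}$, Theorem~1 already gives $H_{\max}(P_E)=G_1^{-1}(P_E)=H(P_E)$, i.e.\ Fano's lower bound read as an equality curve, attained at the components in eq.~(15c); applying the label-switching of Proposition~1 (which fixes $H(T|Y)$ and sends $P_E\mapsto 1-P_E$) turns this into $H_{\max}(P_E)=H(1-P_E)$ on $P_E\ge 1-p_{min}$, the general upper bound. On the middle range $p_{min}\le P_E\le 1-p_{min}$ I would argue $H_{\max}(P_E)=H(T)=H(p_{min})$: since $H(T|Y)=H(T)-MI(T,Y)\le H(T)$, equality requires $MI(T,Y)=0$, i.e.\ $T$ and $Y$ independent, and imposing $p_{ij}=p(t_i)p(y_j)$ on (8) forces $e_2=p_2(p_1-e_1)/p_1$, so $P_E=e_1(p_1-p_2)/p_1+p_2$ sweeps exactly $[p_{min},1-p_{min}]$ as $e_1$ runs over $[0,p_1]$; solving this affine relation for $(e_1,e_2)$ at a prescribed $P_E$ gives precisely eq.~(25), the degenerate case $p_1=p_2=1/2$ collapsing the whole independence line to the single point $P_E=1/2$ and leaving $(0.25,0.25)$ as its symmetric representative. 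So the right envelope is the concatenation of Fano's lower bound, the vertical $H(T|Y)_{\max}$ segment, and the general upper bound, and all of its points are admissible by construction.

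For the left envelope $H_{\min}$: evaluating eq.~(14) at a vertex (one cell of (8) zero) always reduces $H(T|Y)$ to the form $(q+e)H\!\big(e/(q+e)\big)$ with $q\in\{p_1,p_2\}$ and $e\in\{P_E,1-P_E\}$. Two short derivative computations then settle which vertex minimizes: $(q+e)H(e/(q+e))$ is increasing in $q$ (equivalently $H(r)/r$ is decreasing, since $(H(r)/r)'=\log_2(1-r)/r^2<0$), so the minimum over the two available vertices is taken with $q=p_{min}$; and $G_2^{-1}(e)=(p_{min}+e)H(e/(p_{min}+e))$ is increasing in $e$, so between the arguments $P_E$ and $1-P_E$ the smaller one wins. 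This gives $H_{\min}(P_E)=G_2^{-1}(P_E)$ for $P_E\le 1/2$ and $H_{\min}(P_E)=G_2^{-1}(1-P_E)$ for $P_E\ge 1/2$ --- the curved upper bound $G_2^{-1}(\cdot)$ of Theorem~2, which it reproduces on $P_E\le p_{min}$ and extends past it, together with its mirrored upper bound, the two branches meeting at $P_E=1/2$. Every point of both envelopes is admissible (the right one by the distributions above, the left one by the corresponding vertex distributions), and $H_{\min}(P_E)<H_{\max}(P_E)$ whenever $0<P_E<1$ since the feasible $e_1$-interval is then non-degenerate, with equality only at $O=(0,0)$ and $D=(0,1)$. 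Therefore the region is a genuine two-dimensional domain bounded by exactly the five named curves and is closed; when $p_1=p_2$ the $H(T|Y)_{\max}$ segment shrinks to a point and Fano's lower bound meets its mirror there, which is why the statement says the region is \emph{generally} formed this way.

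The step I expect to be the real work is the concavity of $H(T|Y)$ in $e_1$: the excerpt leaves it to a numerical check, and the $\sum_j p(y_j)H(T|y_j)$ decomposition above is the clean way to secure it. Once concavity is in hand, what remains --- determining in each of the three $P_E$-ranges which of the two vertices actually realizes the minimum, and checking that the left and right envelopes join continuously at $P_E=p_{min}$, $P_E=1-p_{min}$, $P_E=1/2$ and at the pinch points $P_E\in\{0,1\}$ --- is a routine case-by-case verification.
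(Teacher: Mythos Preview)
Your proposal is correct and uses the same two ingredients the paper does --- the label-switching of Proposition~1 for the mirrored bounds, and the independence condition $MI(T,Y)=0$ (equivalently $p_{11}/p_{21}=p_{12}/p_{22}$, the paper's eq.~(26)) to produce eq.~(25) --- but your treatment is considerably more complete than the paper's own proof. The paper's argument is three sentences: it invokes Theorem~3 for the mirrored $G_2^{-1}$, cites $H(T|Y)\le H(T)$ for the vertical segment, and refers to an external source for the independence relation yielding~(25); it does not actually verify that these five curves and no others make up the boundary. You go further by organising the region as the stack of horizontal slices $[H_{\min}(P_E),H_{\max}(P_E)]$, computing both envelopes range-by-range, and in particular showing analytically that the extension of $G_2^{-1}$ past $P_E=p_{\min}$ up to $P_E=1/2$ really is the left envelope --- something the paper asserts only implicitly through Fig.~5 and the discussion of Point~$A'$.

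Your analytic concavity argument is the genuine addition: the paper (in the proofs of Theorems~1 and 2) leaves concavity of $H(T|Y)$ in $e_1$ to numerical verification, whereas your perspective-of-$H$ decomposition with the quadratic-over-linear inequality closes that gap cleanly. One small caution: in your vertex analysis, the two feasible endpoints do not range over all four $(q,e)$ combinations with $q\in\{p_1,p_2\}$, $e\in\{P_E,1-P_E\}$ --- in the middle range $p_{\min}<P_E<1-p_{\min}$ both endpoints already have $q=p_{\min}$ --- so the ``first minimize over $q$, then over $e$'' phrasing is a shortcut; the conclusion is still correct once you check each of the three $P_E$-ranges separately, as you note at the end. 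Also, in the balanced case $p_1=p_2$ the left and right envelopes not only meet at $P_E=1/2$ but coincide there (since $G_2^{-1}(1/2)=H(1/2)=1$), so the region genuinely pinches into two pieces, matching Remark~9.
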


\begin{proof}
Following the proof in Theorem 3, one can get the mirrored 
upper bound of $G_2^{-1}(\cdot)$.
The upper bound $H(T|Y)_{max}$ is calculated 
from the condition of $H(T|Y) \leq H(T)$ \cite{cover2006}. 
For the given $p_1$ and $p_2$, $H(T|Y)_{max}$ is a constant.
Because $H(T|Y)_{max}$ also implies a minimization of $MI(T,Y)$ in eq. (14),  
its associated error components can be obtained from the 
minimization relation of $MI(T,Y)$ in forms of 
(see eq. (35) in \cite{hu2008}): 
\begin{equation}\tag{26}
\label{equ:26}
 \frac {p_{11}} {p_{21}} = \frac {p_{12}}  {p_{22}}. 
\end{equation} 
\end{proof}

\begin{figure}
\centering
\includegraphics[width=3.3in]{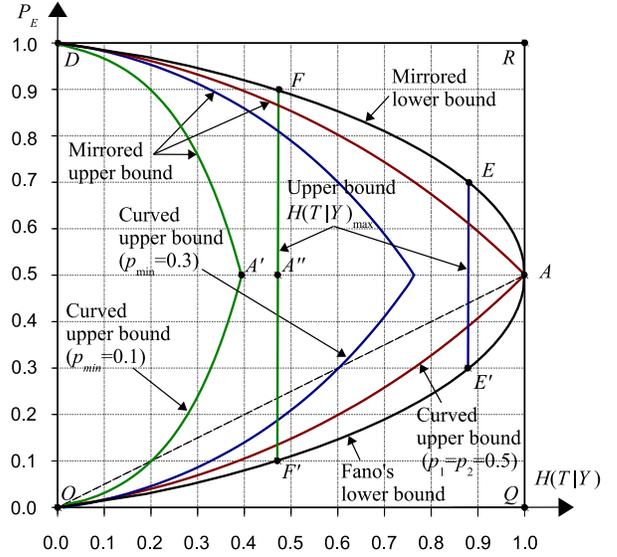}
\caption{Plot of bounds in a ``$P_E$ vs. $H(T|Y)$'' diagram.
}
\label{fig:2}
\end{figure}

\begin{remark}
Eqs. (25) and (26) equivalently imply a zero value for 
the mutual information, $MI(T,Y)=0$, 
which suggests {\it no correlation} \cite{duda2001} or
{\it statistically independent} \cite{cover2006}
between two variables $T$ and $Y$.
\end{remark}

\begin{remark}
When information of $p_1$ and $p_2$ is known, 
the shape of the admissible region(s) is fully dependent 
on a single parameter $p_{min}$.
Two closed admissible regions are formed only when $p_1 = p_2$ (Fig. 5). 
One region is from   
Fano's lower bound and the upper bound. 
The other is from the mirrored upper bound
and the general upper bound.  
In general, the non-Bayesian error $P_E$ can be 
higher than Kovalevskij's bound.
\end{remark}

\section{Classification Interpretations to some key points}

For better understanding the theoretical results
from a background of classifications,
interpretations are given to
some key points shown in Figs. 4 and 5, respectively.
Those key points may hold special features in classifications. 

{\it Point O}: 
This point represents a zero value of $H(T|Y)$. 
It also suggests a {\it perfect classification} without
any error ($P_e=P_E=0$) by a specific setting 
of the joint distribution:
$$\setlength\arraycolsep{0.1em}
 \begin{array}{r@{\quad}l}
& p_{11}=p_1, ~  p_{12}=0,  \\
& p_{21}=0, ~~  p_{22}=p_2.
\end{array} 
\eqno (27) 
$$
This point is always admissible and independent 
of error types.

{\it Point A}: 
This point shows the maximum ranges of $H(T|Y)=1$ for {\it class-balanced}
classifications ($p_1=p_2$). Three specific classification settings 
can be obtained for representing this point. The two settings 
from eq. (24c) are
actually {\it no classification}: 
$$\setlength\arraycolsep{0.1em}
 \begin{array}{r@{\quad}l}
& p_{11}=1/2, ~  p_{12}=0,  ~or ~~  p_{11}=0, ~  p_{12}=1/2,\\
& p_{21}=1/2, ~  p_{22}=0, ~~~~~~ p_{21}=0, ~  p_{22}=1/2.
\end{array} 
\eqno (28) 
$$
They also indicate {\it zero information} \cite{mackay2003}
from the classification decisions. 
The other setting is a {\it random guessing} from eq. (25):
$$\setlength\arraycolsep{0.1em}
 \begin{array}{r@{\quad}l}
& p_{11}=1/4, ~  p_{12}=1/4,  \\
& p_{21}=1/4, ~  p_{22}=1/4. 
\end{array} 
\eqno (29) 
$$
For the Bayesian errors, this point is always included by
both Fanos' bound and Kovalevskij's bound.
However, according to the upper bounds
defined in (21a), this point is non-admissible
whenever the relation of $p_1=p_2$ does not hold.
For the non-Bayesian errors, the point is either
admissible or non-admissible depending on the 
given information about $p_1$ and $p_2$. 
This example suggests that the admissible property
of a point should generally rely on the given information
in classifications.

{\it Point D}: 
This point occurs for the non-Bayesian classifications 
in a form of:
$$\setlength\arraycolsep{0.1em}
 \begin{array}{r@{\quad}l}
& p_{11}=0, ~~  p_{12}=p_1,  \\
& p_{21}=p_2, ~  p_{22}=0.
\end{array} 
\eqno (30) 
$$
In this case, one can exchange the labels for a perfect classification.

{\it Point B}: 
This point is located at the corner formed by 
the curved and linear  upper bounds,
with $H(T|Y)=0.8$ and $e=0.4$. 
In apart from Point $O$, this is another point obtained from eq. (21)
that sets at Kovalevskij's upper bound.
The point can be realized from either Bayesian or 
non-Bayesian classifications. 
Suppose $p_1 > p_2=0.4$ for the Bayesian classifications. 
One will achieve Point $B$ by a classification:
$$\setlength\arraycolsep{0.1em}
\begin{array}{r@{\quad}l}
& p_{11}=0.2, ~  p_{12}=0.4,  \\
& p_{21}=0, ~~~  p_{22}=0.4,
\end{array} 
\eqno (31) 
$$ 
for a one-to-one mapping. In other words,
the point becomes non-admissible whenever 
$p_{min} \neq 0.4$. 
If the non-Bayesian errors are considered,
this point will 
possess a one-to-many mapping. For example, one can get 
another setting from solving $H(p_{min})=0.8$ for
$p_{min}$ first. Then, by substituting the relations of
$p_2=p_{min}$ and $P_E=0.4$ into eq. (25), one can get 
the error components. The numerical 
results show the 
approximation solutions with $p_{min}\approx 0.2430, ~
e_1\approx0.2312$, and $e_2\approx0.1688$ for another setting
of Point $B$.

{\it Point $B'$}: 
The point located at the lower bound, like Point $B'$, will produce 
a one-to-many mapping for either the Bayesian errors 
or non-Bayesian errors.
One specific setting 
in terms of the Bayesian errors is:
$$\setlength\arraycolsep{0.1em}
\begin{array}{r@{\quad}l}
& p_{11}=0.6, ~  p_{12}=0,\\
& p_{21}=0.4, ~  p_{22}=0, 
\end{array} 
\eqno (32) 
$$
which suggests zero information from classifications.
More settings can be obtained from eq. (15).
For example, if given $p_1=0.55$, $p_2=0.45$ and $P_e=0.4$,
one can have:
$$\setlength\arraycolsep{0.1em}
 \begin{array}{r@{\quad}l}
& p_{11}=0.45, ~  p_{12}=0.1,\\
& p_{21}=0.3, ~~ p_{22}=0.15.
\end{array} 
\eqno (33) 
$$
The non-Bayesian errors will enlarge the set of 
one-to-many mapping for 
an admissible point of the Bayesian errors
due to the relaxed condition of (13). 
One setting is for the balenced error 
components: 
$$\setlength\arraycolsep{0.1em}
 \begin{array}{r@{\quad}l}
& p_{11}=0.3, ~  p_{12}=0.2,\\
& p_{21}=0.2, ~  p_{22}=0.3.
\end{array} 
\eqno (34) 
$$
Eq. (24c) will be applicable for 
deriving a specific setting
when $p_1$ and $P_E$ are given. 
For example, two settings 
can be obtained:
$$\setlength\arraycolsep{0.1em}
 \begin{array}{r@{\quad}l}
~if ~~~&  p_1=0.25, ~~  P_E=0.4,~\\
then & e_1=0.175, ~  e_2=0.225, 
\end{array} 
\eqno (35) 
$$
$$\setlength\arraycolsep{0.1em}
 \begin{array}{r@{\quad}l}
~if ~~~&  p_1=0.3, ~~~~  P_E=0.4,\\
then & e_1=0.225, ~  e_2=0.175.
\end{array} 
\eqno (36) 
$$
for representing the same point, Point $B'$, which is 
located at $H(T|Y)\approx0.9710$ and $P_E=0.4$
in the diagram (Fig. 4).   

{\it Points E and $E'$}: 
All points located at the general upper bound, like Point $E$, will
correspond to the settings from the 
non-Bayesian errors. 
If a point located at the lower bound, say $E'$, it
can represent settings from either the Bayesian or
non-Bayesian errors depending on the given information
in classifications.  
Points $E$ and $E'$ form the mirrored points. Their settings 
can be connected by a relation in (23), but not a necessary.
For example, 
one specific setting for Point $E'$ with $p_1=0.3$ and $p_2=0.7$ is: 
$$\setlength\arraycolsep{0.1em}
 \begin{array}{r@{\quad}l}
& p_{11}=0, ~ p_{12}=0.3,  \\
& p_{21}=0, ~ p_{22}=0.7,
\end{array} 
\eqno (37) 
$$
the other for Point $E$ with $p_1=0.8$ and $p_2=0.2$ is: 
$$\setlength\arraycolsep{0.1em}
 \begin{array}{r@{\quad}l}
& p_{11}=\frac {20} {30}, ~ p_{12}=\frac {4} {30},  \\
& p_{21}=\frac {5} {30}, ~ p_{22}=\frac {1} {30}.
\end{array} 
\eqno (38) 
$$
They are mirrored to each other but have no label-switching relation.

{\it Points $A'$ and $A''$}: 
When $P_E=0.5$ and $p_{min}=0.1$, Points $A'$ and $A''$
form a pair as the ending points for the given conditions.
Supposing $p_1=0.9$ and $p_2 = 0.1$, one can get 
the specific setting for Point $A'$ from eq. (21c): 
$$\setlength\arraycolsep{0.1em}
 \begin{array}{r@{\quad}l}
& p_{11}=0.4, ~ p_{12}=0.5,  \\
& p_{21}=0, ~~~ p_{22}=0.4,
\end{array} 
\eqno (39) 
$$
and one for Point $A''$ from eq. (25):
$$\setlength\arraycolsep{0.1em}
 \begin{array}{r@{\quad}l}
& p_{11}=0.45, ~ p_{12}=0.45,  \\
& p_{21}=0.05, ~ p_{22}=0.05.
\end{array} 
\eqno (40) 
$$

{\it Points Q and R}: 
The two points are specific due to their positions in the diagrams. 
For either type of errors, both points are
non-admissible in the diagrams, because no setting exists
in binary classifications which can represent the points.

\section{Summary and discussions}
\label{sec:6}

This work investigates into upper and lower bounds between
entropy and error probability. An optimization approach
is proposed to the derivations of the bound 
functions from a joint distribution. 
As a preliminary work, we consider binary classifications
for a case study. 
Through the approach, a new upper bound is derived and shows
tighter in general than Kovalevskij's upper bound.  
The closed-form relations 
between bounds and error components are presented.
The analytical results lead to a better understanding about the
sharp conditions of bounds in terms error components.
Because classifications involve either Bayesian errors or 
non-Bayesian ones, we demonstrate the bounds comparatively
for both types of errors. 

We recognize that analytical tractability is an issue for 
the proposed approach. Fortunately, a symbolic software
tool is helpful for solving complex problems successfully 
with different semi-analytical means (such as in 
\cite{Subramanian2000}\cite{Temimi2011}). 
The semi-analytical solution used in this work 
refers to the analytical derivation of possible 
solutions but the numerical 
verification of the final solution. 

To emphasize the importance of the study, we present discussions
below from the perspective of machine learning in big-data 
classifications. We consider that binary classifications will be 
one of key techniques to implement a {\it divide-and-conquer} strategy 
for efficiently processing large quantities 
of data. 
Class-imbalance problems with extremely-skewed ratios
are mostly formed from a {\it one-against-other} division scheme
for binary classes. Researchers, of course,  
concern error types in classification performance.
The knowledge of bounds in relation to error components 
is desirable for theoretical and application purposes.

From a viewpoint of machine learning,
the bounds derived in this work 
provide a basic solution to link
learning targets between error and entropy in the related studies.
{\it Error-based learning} is more conventional
because of its compatibility with our intuitions
in daily life, such as ``{\it trial and error}''. 
Significant studies have been reported
under this category. In comparison,
{\it information-based learning} \cite{principe2010}
is relatively new and uncommon in some applications, such as classifications.
Entropy is not a well-accepted concept related to our
intuition in decision making. This is one of the reasons
why the learning target is chosen mainly based on error,
rather than on entropy. However, we consider that error is
an empirical concept, whereas entropy is theoretical and general.
In \cite{hu2012a}, we demonstrated that 
entropy can deal with both notions of {\it error} and {\it reject}
in abstaining classifications. 
Information-based learning \cite{principe2010} presents 
a promising and wider perspective for exploring and interpreting
learning mechanisms. 

When considering all sides of the issues stemming 
from machine learning studies, we believe that 
``{\it what to learn}'' is a primary problem. However, 
it seems that more investigation is focused on the issue
of ``{\it how to learn}'', which should be put as the
second-level problem.
Moreover, in comparison with the long-standing yet 
hot theme of {\it feature selection}, 
little study has been done from the perspective 
of {\it learning target selection}.
We propose that this theme should be emphasized in the study of
machine learning. Hence, the relations studied in this work 
are fundamental and crucial to the extent that researchers, 
using either error-based or entropy-based approaches, are able 
to reach a better understanding about its counterpart.  

\appendices
\begin{figure*}[!t]
\section{Maple code for deriving the lower bound}
\begin{verbatim}
> restart;                          # Clean the memory
> p2:=1-p1;e2:=Pe-e1;               # Describe the bound with respect to p1 and e1
> HT:=-p1*log[2](p1)-p2*log[2](p2); # Shannon entropy                               
> p11:=(p1-e1);p12:=e1;p22:=p2-e2;p21:=e2;            # Terms of joint probability             
> q1:=p11+p21;q2:=p12+p22;                            # Intermediate variables             
> MI:=p11*log[2](p11/q1/p1)+p12*log[2](p12/q2/p1);
> MI=MI+p22*log[2](p22/q2/(1-p1))+p21*log[2](p21/q1/(1-p1));# Mutual information  
> HTY:=(HT-MI);                                             # Conditional entropy
> HTY_dif_p1:=simplify(combine(diff(HTY,p1),ln, symbolic)); # Differential w.r.t. p1
                     /(p1 - 2 e1 + Pe) (-1 + p1 + Pe - e1)\
                   ln|------------------------------------|
                     \  (p1 - e1) (-2 e1 - 1 + p1 + Pe)   /
     HTY_dif_p1 := ----------------------------------------
                                    ln(2)                  
> HTY_dif_e1:=simplify(combine(diff(HTY,e1),ln, symbolic)); # Differential w.r.t. e1
                 /                                 2          \
                 |  (p1 - e1) (-2 e1 - 1 + p1 + Pe)  (Pe - e1)|
               ln|- ------------------------------------------|
                 |                   2                        |
                 \   (p1 - 2 e1 + Pe)  e1 (-1 + p1 + Pe - e1) /
 HTY_dif_e1 := ------------------------------------------------
                                    ln(2)                      
> solve({HTY_dif_p1=0,HTY_dif_e1=0}, {e1, p1});             # not a complete set of
                                                            # possible solutions
                 /                  2                    \ 
                 |                Pe  + e1 - Pe - 2 e1 Pe| 
                < e1 = e1, p1 = - ----------------------- >
                 |                          Pe           | 
                 \                                       / 
> E1:=solve(HTY_dif_e1, e1);     # a complete set of possible solutions when p1 is known
                       Pe (-1 + p1 + Pe)  p1 (-1 + p1 + Pe)
                 E1 := -----------------, -----------------
                           2 Pe - 1           2 p1 - 1     
> P1_a:=solve(E1[1]=e1, {p1});P1_bc:=solve(E1[2]=e1, {p1}); # a complete set of possible 
                                                            # solutions when e1 is known
                          /         2                    \ 
                          |       Pe  + e1 - Pe - 2 e1 Pe| 
                 P1_a := < p1 = - ----------------------- >
                          |                 Pe           | 
                          \                              / 
                           /                                                         (1/2)\   
                           |          1   1      1 /    2                          2\     |   
                 P1_bc := < p1 = e1 + - - - Pe + - \4 e1  - 4 e1 Pe + 1 - 2 Pe + Pe /      >, 
                           |          2   2      2                                        |   
                           \                                                              /   

                           /                                                         (1/2)\ 
                           |          1   1      1 /    2                          2\     | 
                          < p1 = e1 + - - - Pe - - \4 e1  - 4 e1 Pe + 1 - 2 Pe + Pe /      >
                           |          2   2      2                                        | 
                           \                                                              / 
> simplify(combine(simplify(eval(HTY, e1=E1[1])),ln,symbolic)); # failed to show it explicitly           
> simplify(eval(HTY, e1=E1[2]));    # Display of the lower bound function in terms of p1
                   p1 ln(p1) + ln(1 - p1) - ln(1 - p1) p1
                 - --------------------------------------
                                   ln(2)                 
> # verification of concavity of HTY by a numerical way (changing Pe and p1 arbitrarily
> Pe:=0.5;p1:=0.6;plot(HTY_graph,e1=0..Pe);                     # with the constraints)                            
\end{verbatim}

\end{figure*} 

\begin{figure*}[!t]
\section{Maple code for deriving the upper bound}
\begin{verbatim}
> restart;                                  # Clean the memory 
> HT:=-p1*log[2](p1)-p2*log[2](p2);         # Shannon entropy   
> p11:=(p1-e1);p12:=e1;p22:=p2-e2;p21:=e2;  # Terms of joint distribution  
> # To examine the HTY on two ending points for e2, i.e., e2 = 0 and e2=e
> # For derivation of the upper bound function when e2=0 
> e1:=e;e2:=0;p1:=1-p2;       
> q1:=p11+p21;q2:=p12+p22;                  # Intermediate variables 
> MI:=p11*log[2](p11/q1/p1)+p12*log[2](p12/q2/p1); # Mutual information
> MI:=MI+p22*log[2](p22/q2/(1-p1));         # Neglect one term when 0*log(0)=0  
> HTY_1:=combine(simplify(combine(simplify(HT-MI),ln,symbolic)));  
> # Display of the upper bound function when e2=e
                           /e + p2\       /e + p2\
                      p2 ln|------| + e ln|------|
                           \  p2  /       \  e   /
             HTY_1 := ----------------------------
                                 ln(2)            
> # For derivation of the upper bound function when e2=e 
> e1:=0;e2:=e;      
> q1:=p11+p21;q2:=p12+p22;                  # Intermediate variables   
> MI:=p11*log[2](p11/q1/p1);                # Neglect one term when 0*log(0)=0
> MI:=MI+p22*log[2](p22/q2/(1-p1))+p21*log[2](p21/q1/(1-p1));   
> HTY:=eval(HT-MI,p2=1-P1);     # Using P1 for p1
> HTY_2:=combine(simplify(combine(simplify(HTY),ln,symbolic)));  
> # Display of the upper bound function in terms of e and p2
                            /  P1  \       /  e   \
                      -P1 ln|------| - e ln|------|
                            \P1 + e/       \P1 + e/
             HTY_2 := -----------------------------
                                  ln(2)            
> # To calculate the difference between HTY_1 and HTY_2
> delta_HTY:=combine(simplify(HTY_1-HTY_2),ln,symbolic);                  
                     /e + p2\       /e + p2\        /  P1  \
                p2 ln|------| + e ln|------| + P1 ln|------|
                     \  p2  /       \P1 + e/        \P1 + e/
   delta_HTY := --------------------------------------------
                                   ln(2)                    
> # numerical verification of the solution to HTY below: 
> # changing p2 arbitrarily with the constraint
> # when p2<0.5, delta_HTY<0, HTY_1 is the final solution, 
> # when p2>0.5, delta_HTY>0, HTY_2 is the final solution,  
> # when p2=0.5, delta_HTY=0, both are the solutions. 
> p2:=0.4;P1:=1-p2;plot(delta_HTY,e=0..p2);
\end{verbatim}
\end{figure*}

\section*{Acknowledgments}
This work is supported in part by NSFC No. 61075051,
SPRP-CAS No. XDA06030300
for BG, and NSFC No. 60903089 for HJ. 
The previous version of this work, 
entitled ``Analytical bounds between entropy 
and error probability in binary classifications'', 
was appeared as arXiv:1205.6602v1[cs.IT] in 
May 30, 2012. Thanks to the anonymous reviewers for the 
comments and suggestions during the peer reviewing processing,
particularly for our attention to the reference [21].

%

\bibliographystyle{IEEETrans}


\begin{thebibliography}{1}

\bibitem{Fano1961}
R.M. Fano,
\emph{Transmission of Information: A Statistical Theory of Communication}. 
New York: MIT, 1961.

\bibitem{cover2006}
T.M. Cover and J.A. Thomas, 
\emph{Elements of Information Theory}. 2nd eds., 
New York:John Wiley, 2006.

\bibitem{verdu1998}
S. Verd\'{u},
``Fifty years of Shannon theory,''
\emph{IEEE Trans. Inform. Theory}, vol. 44, 
pp. 2057\textendash2078, 1998.


\bibitem{yeung2002}
R.W. Yeung, {\it A First Course in Information Theory\/}. 
London:Kluwer Academic, 2002.

\bibitem{Golic1999}
J. D. Goli\'{c}, 
``Comment on `Relations between entropy and error probability',''  
\emph{IEEE Trans. Inform. Theory},
vol. 45, p. 372, 1999.

\bibitem{Vajda2007}
I. Vajda and J. Zv\'{a}rov\'{a},  
``On generalized entropies, Bayesian decisions and statistical diversity,'' 
\emph{Kybernetika}, 
vol. 43, pp. 675\textendash696, 2007.

\bibitem{Morales2010}
D. Morales and I. Vajda:
``Generalized Information Criteria
for Optimal Bayes Decisions''. 
\emph{Kybernetika}, 
vol. 48, pp. 714\textendash749, 2012.

\bibitem{Kovalevskij1965}
V.A. Kovalevskij, 
``The problem of character recognition from the point
of view of mathematical statistics,'' 
\emph{Character Readers and Pattern Recognition}, 
pp. 3\textendash30, New York:Spartan, 1968. Russian edition 1965.

\bibitem{Chu1966}
J. T. Chu and J. C. Chueh, 
``Inequalities between information measures
and error probability,'' \emph{J. Franklin Inst.}, vol. 282, pp. 121\textendash125, 1966.

\bibitem{Tebbe1968}
D. L. Tebbe and S. J. Dwyer III,
``Uncertainty and probability of error,'' 
\emph{IEEE Trans. Inform. Theory.}, vol. 16, pp. 516\textendash518, 1968.

\bibitem{Hellman1970}
M. E. Hellman and J. Raviv, 
``Probability of error, equivocation,
and the Chernoff bound,'' \emph{IEEE Trans. Inform. Theory.}, vol. 16,
 pp. 368\textendash372, 1970.

\bibitem{Chen1971}
C.H. Chen, 
``Theoretical comparison of a class of feature
selection criteria in pattern recognition,''  \emph{IEEE Trans. Comput.},
vol. C-20, pp. 1054\textendash1056, 1971.

\bibitem{Bassat1978}
M. Ben-Bassat and J. Raviv, 
``Renyi’s entropy and the probability
of Error,'' \emph{IEEE Trans. Inform. Theory}, vol. 24,
pp. 324\textendash330, 1978.

\bibitem{Golic1987}
J. D. Goli\'{c}, 
``On the relationship between the information measures and the Bayes
probability of error,''  \emph{IEEE Trans. Inform. Theory},
vol. 35, pp. 681\textendash690, 1987.

\bibitem{Feder1994}
M. Feder and N. Merhav, 
``Relations between entropy and error
Probability,'' \emph{IEEE Trans. Inform. Theory}, vol. 40,
pp. 259\textendash266, 1994.

\bibitem{Han1994}
T.S. Han and S. Verd\'{u}, 
``Generalizing the Fano inequality,''
\emph{IEEE Trans. Inform. Theory}, vol. 40, 
pp.1247\textendash1251, 1994.

\bibitem{Taneja1985}
I. J. Taneja, 
``Generalized error bounds in pattern recognition,''
\emph{Pattern Recognition Letters}, 
vol. 3, pp. 361\textendash368, 1985.

\bibitem{Poor1995}
H.V. Poor and S. Verd\'{u}, 
``A Lower bound on the probability of
error in multihypothesis testing,'' \emph{IEEE Trans. Inform. 
Theory}, vol. 41, pp. 1992\textendash1994, 1995.

\bibitem{Harremoes2001}
P. Harremo\"{e}s and F. Tops\o{}e,
``Inequalities between entropy and index of coincidence
derived from information diagrams,'' \emph{IEEE Trans. Inform. 
Theory}, vol. 47, pp. 2944\textendash2960, 2001.

\bibitem{Erdogmus2004}
D. Erdogmus, and J.C. Principe, 
``Lower and upper bounds for misclassiﬁcation
probability based on Renyi's information,'' \emph{Journal of VLSI Signal Processing},
vol. 37, pp. 305\textendash317, 2004.

\bibitem{Ho2010}
S.-W. Ho and S. Verd\'{u}, 
``On the Interplay between Conditional Entropy and Error Probability,'' 
\emph{IEEE Trans. Inform. 
Theory}, vol. 56, pp. 5930\textendash5942, 2010.

\bibitem{shannon1948}
C.E. Shannon, 
``A mathematical theory of communication'', 
\emph{Bell System Technical Journal}, 
vol. 27, pp. 379\textendash423 and pp. 623\textendash656, 1948.

\bibitem{Feder1992}
M. Feder and N. Merhav, 
``Universal prediction of individual sequences,'' 
\emph{IEEE Trans. Inform. Theory}, vol. 38,
pp. 1258\textendash1270, 1992.

\bibitem{wang2009}
Y. Wang and B.-G. Hu, 
``Derivations of normalized mutual information in binary classifications,''
\emph{Proceedings of the 6th International Conference on Fuzzy Systems 
and Knowledge Discovery}, pp. 155\textendash163, 2009. 

\bibitem{hu2012}
B.-G. Hu, 
``What are the differences between Bayesian 
classifiers and mutual-information classifiers?'' 
Preprint available: \emph{http://arxiv.org/abs/1105.0051v2},
2012.

\bibitem{Eriksson2005}
T. Eriksson, S. Kim, H.-G. Kang, and C. Lee,
``An information-theoretic perspective on feature
selection in speaker recognition,'' \emph{IEEE Signal Processing Letter},
vol. 12, pp. 500\textendash503, 2005.

\bibitem{Fisher2009}
J.W. Fisher III, M. Siracusa, and K. Tieu,
``Estimation of signal information content for classification,'' 
\emph{Proceedings of IEEE DSP Workshop},
pp. 353\textendash358, 2009.

\bibitem{Cristianini2000}
N. Cristianini and J. Shawe-Taylor, 
\emph{An Introduction to Support Vector Machines and other 
Kernel-based Learning Methods}.
London: Cambridge University Press, 2000. 

\bibitem{duda2001}
R.O. Duda, P.E. Hart, and D. Stork, 
\emph{Pattern Classification}. 2nd eds., 
New York: John Wiley, 2001. 

\bibitem{He2009}
H. He, and E. A. Garcia, 
``Learning from imbalanced data,''  
\emph{IEEE Transactions
on Knowledge and Data Engineering},
vol. 21, pp. 1263\textendash1284, 2009. 

\bibitem{Sun2009}
Y.M. Sun, A.K.C. Wong, and M.S. Kamel, 
`` Classification of imbalanced data:
a review,'' 
\emph{International Journal of Pattern Recognition 
and Artificial Intelligence},
vol. 23, 687\textendash719, 2009.

\bibitem{mackay2003}
D.J.C. Mackay, 
\emph{Information Theory, Inference, and Learning Algorithms}.
Cambridge:Cambridge University Press, 2003.

\bibitem{hu2008}
B.-G. Hu and Y. Wang, 
``Evaluation criteria based on mutual information 
for classifications including rejected class,''
\emph{Acta Automatica Sinica}, 
vol. 34, pp. 1396\textendash1403, 2008.


\bibitem{Subramanian2000}
V.R. Subramanian and R.E. White,
``Symbolic solutions for boundary value problems using Maple,''
\emph{Computers and Chemical Engineering}, 
vol. 24, pp. 2405\textendash2416, 2000.

\bibitem{Temimi2011}
H. Temimi and A.R. Ansari,
``A semi-analytical iterative technique for solving nonlinear problems,''
\emph{Computers and Mathematics with Applications}, 
vol. 61, pp. 203\textendash210, 2011.

\bibitem{principe2010}
J.C. Principe, 
\emph{Information Theoretic Learning: Renyi's Entropy and Kernel Perspectives}. 
New York: Springer, 2010.

\bibitem{hu2012a}
B.-G. Hu, R. He, and X.-T, Yuan, 
``Information-theoretic measures for objective evaluation of classifications,''
\emph{Acta Automatica Sinica}, vol. 38, pp. 1160\textendash1173, 2012. 

\end{thebibliography}

%





\end{document}